\documentclass{cccg22}
\usepackage[T1]{fontenc}
\usepackage[utf8]{inputenc}

\usepackage{amsfonts,amsmath,amssymb}
\usepackage{cite,microtype,graphicx}
\newtheorem{observation}[theorem]{Observation}

\title{Reflections in an Octagonal Mirror Maze}
\author{David Eppstein\thanks{Department of Computer Science, University of California, Irvine. Research supported in part by NSF grant CCF-2212129.}}

\date{ }

\begin{document}
\maketitle  

\begin{abstract}
Suppose we are given an environment consisting of axis-parallel and diagonal line segments with integer endpoints, each of which may be reflective or non-reflective, with integer endpoints, and an initial position for a light ray passing through points of the integer grid. Then in time polynomial in the number of segments and in the number of bits needed to specify the coordinates of the input, we can determine the eventual fate of the reflected ray.
\end{abstract}

\section{Introduction}

There are many problems in graphics and visibility testing where it is of interest to determine the path that would be taken by a ray of light, through an environment that may contain mirrors. Figure~\ref{fig:8reflex} gives a simple example of a problem of this type. Even for very restricted environments such as the one depicted, where the starting point of the ray and all endpoints of mirrored segments have integer coordinates and where the mirrors are all either axis-aligned or at $45^\circ$ angles to the axes, the path of such a ray may be very complicated, taking a number of reflections that may depend on the geometry of the input and not merely on its combinatorial complexity. For instance, a ray that bounces diagonally between two parallel mirrors on opposite sides of a long thin rectangle will only exit the rectangle after a number of bounces proportional to the aspect ratio of the rectangle, even though such an environment consists of only two segments. Nevertheless, in that simple two-mirror example, the eventual path of the ray is easy to predict, without resorting to separately simulating each bounce. What about for environments of more than two mirrors? Is it still easy to ray-trace reflected rays in such environments?

\begin{figure}[ht]
\centering\includegraphics[width=\columnwidth]{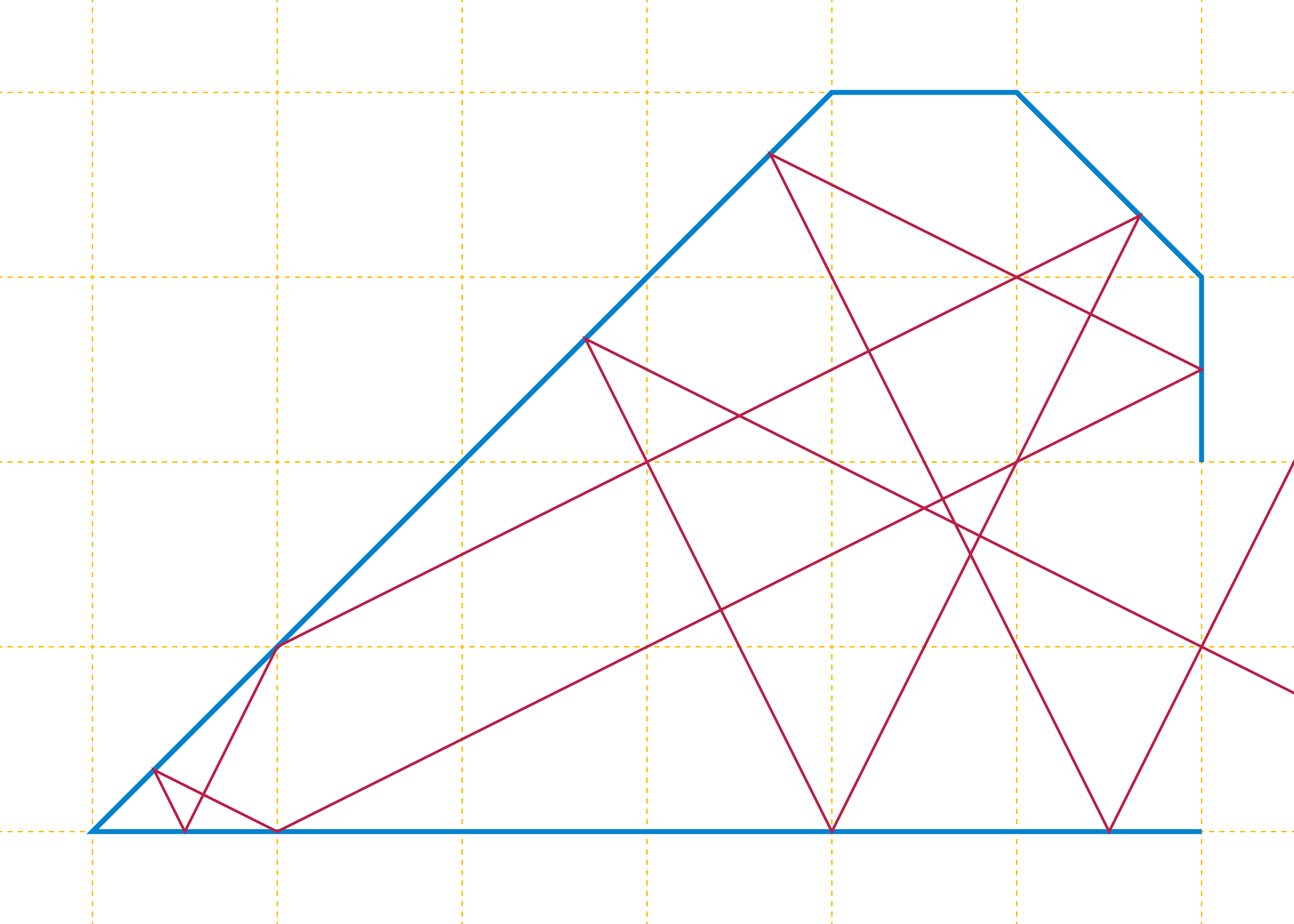}
\caption{The reflections of a light ray (red) among mirrors that are axis-aligned or at $45^\circ$ angles to the axes (blue)}
\label{fig:8reflex}
\end{figure}

We formalize this problem as follows. The input environment is described as a collection of line segments, with integer endpoints and either parallel to a coordinate axis or at a $45^\circ$ angle to the axes. Each side of each line segment may be marked as reflective or non-reflective. We are also given an integer position for the start of a light ray, and an integer vector describing the initial direction of the light ray. The restricted orientation of the mirrors ensures that each reflection of the ray in one of the reflective segments continues to have integer slope, on a line through infinitely many points of the integer grid. If, after repeated reflections, the ray eventually hits a non-reflective segment, the endpoint of a segment, or its initial position and orientation, it stops; otherwise, it must eventually escape the environment along an unobstructed infinite ray. The output of the problem is the eventual fate of the ray: the point where it stops, or the ray along which it escapes. Our main result is that we can determine this outcome in polynomial time.

Let $n$ denote the number of segments of the input, and suppose that all of the integers in the input specification (including the ones specifying the initial vector of the traced ray) have magnitude at most $N$. For these problem size parameters, it would be trivial to solve the problem in time polynomial in $N$ – simply trace the ray one reflection at a time – but this time bound is not polynomial, as it is exponentially larger than the input size. Our time bound is weakly polynomial, but not strongly polynomial: it is a polynomial of the number of bits required to specify the input, which is $O(n\log N)$. For the purposes of polynomial-time complexity, it would be equivalent to allow input coordinates that are rational numbers, rather than integers, with numerators and denominators of magnitude at most $N$. Clearing denominators would produce an integer input whose coordinates have magnitude $N^{O(n)}$ (the product of the input numerators and denominators), still requiring only a polynomial number of bits to specify, $O(n^2\log N)$.

The main idea of our algorithm is to transform the problem into one of determining the iterated behavior of a certain type of discrete one-dimensional dynamic system, which in a related recent paper~\cite{Epp-CIRC-21} we called an \emph{iterated integer interval exchange transformation}. In turn, following that paper, we can transform the iterated integer interval exchange transformation problem into a previously-studied problem in computational topology, of following paths along \emph{normal curves} on triangulated topological surfaces. To solve this path-following problem on normal curves in triangulated surfaces, we apply algorithms of Erickson and Nayyeri~\cite{EriNay-DCG-13}.

The general topic of visibility and ray-shooting with reflection is one with much prior work, both heuristic as part of the computer graphics rendering pipeline and with more rigorous bounds in computational geometry, for which see, e.g., \cite{AroDavDey-DCG-98a,AroDavDey-DCG-98b,PraPalDey-CGTA-98,OroPet-CCCG-01,AanBisPal-EuroCG-08,MahMohKou-CCCG-11,GhoGosMah-VC-12,VaeGho-TCS-19}. However, this past work has a combinatorial complexity that blows up with the number of reflections. In contrast, our results give a polynomial time algorithm whose complexity is independent of the number of reflections.

\section{Iterated interval exchange transformations}

In a recent paper of the author~\cite{Epp-CIRC-21} we investigated a broad class of problems, involving computing the $n$th iterate of a polynomial-time bijective function. One motivation for this investigation was in ray-tracing problems like the one studied here: if an environment consists only of mirrors, with no absorbing barriers for light, then (modulo representational issues involving whether reflections preserve the integer nature of a light ray) the mapping from each reflected position and direction of a light ray to the next is just such a polynomial-time bijection. Most of the problems considered in our recent paper have high computational complexity. However, our paper also identified a special class of bijections, the \emph{integer interval exchange transformations}, for which iterates can be computed in polynomial time. We will use the resulting \emph{iterated integer interval exchange transformation problem} as a subroutine in our algorithm for finding the result of a sequence of reflections. In this section we summarize the definitions needed to apply this problem, following our previous paper.

We define an integer interval exchange transformation to be a certain type of piecewise-linear bijective mapping on a range of consecutive integers. It may be defined by a partition of the range into subintervals, and by a permutation of those subintervals. The transformation then translates each subinterval (meaning that it acts on this interval by addition of the same value to each integer in the interval), so that the translated subintervals again form a partition of the same range, reordered into the given permutation.  An example, used in \cref{fig:normal-interval-exchange}, is the transformation on $[0,15)$ that permutes the intervals $a=[0,3],b=[4,5],c=[6],d=[7,14]$ into the permuted order $b,d,c,a$. This permutation describes the function
\[
x \mapsto \begin{cases}
x+11,&\text{for } x\in[0,3]\\
x-4,&\text{for } x\in[4,5]\\
x+4,&\text{for } x\in[6]\\
x-5,&\text{for } x\in[7,14].\\
\end{cases}
\]
A transformation of this type, with $m$ intervals on the range $[0,M-1]$, can be specified by $O(m\log M)$ bits of information, specifying the endpoints and permuted position of each subinterval. The resulting integer function may be evaluated on any integer $x$ in its range in time $O(m)$, by a sequential search of the listed subintervals to find the one containing $x$, and a second scan of the subintervals to determine which ones have permuted positions before the one containing $x$ and contribute to the translation offset for $x$. Even faster evaluations are possible if the intervals are stored in sorted order with their translation offsets. The \emph{iterated interval exchange transformation problem} takes as input an integer interval exchange transformation $\mu$, represented in either of these ways, an integer $x$ in its range, and another non-negative integer $k$. The goal is to compute $\mu^{(k)}(x)$, the result of repeatedly replacing $x$ by its transformed value, $k$ times.

Following a suggestion of Mark Bell, our paper~\cite{Epp-CIRC-21} shows that, for every integer interval exchange transformation, it is possible to find a corresponding triangulated two-dimensional manifold, and a \emph{normal curve} on the surface, such that tracing the normal curve for a specified number of steps corresponds to evaluating the integer interval exchange transformation  (Figure~\ref{fig:normal-interval-exchange}). Here, a normal curve is a curve through the triangles of the surface, avoiding triangle vertices and passing straight across each triangle from edge to edge, without crossing itself. It can be specified by a system of numbers on each edge counting the number of segments of the curve that cross that edge; this specification must obey certain consistency constraints (the numbers of crossings on the three edges of each triangle must obey the triangle inequality and sum to an even number). This specification is sufficient to reconstruct the curve itself, up to topological equivalence.

\begin{figure}[t]
\centering\includegraphics[width=\columnwidth]{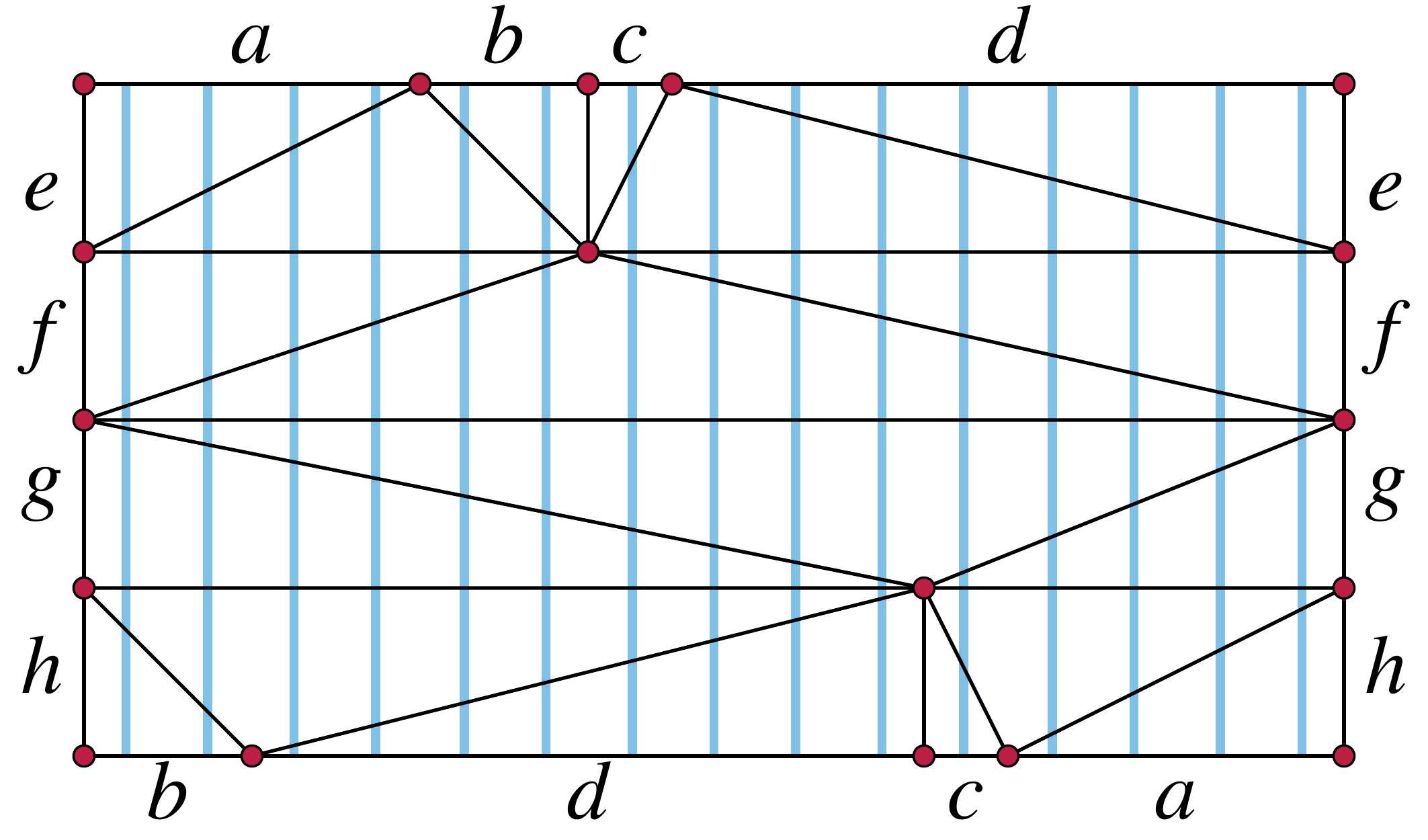}
\caption{A normal curve (light blue) on a triangulated double torus (black triangles and red vertices, glued from top to bottom and from left side to right side with the pairing indicated by the letters). Traversing the normal curve upwards from its central horizontal line, through the glued edges from top to bottom, and continuing upwards back to the same central line, permutes the branches of the curve according to the integer interval exchange transformation that maps $[0,3]\mapsto [11,14]$, $[4,5]\mapsto [0,1]$, $6\mapsto 10$, and $[7,14]\mapsto [2,9]$. From~\cite{Epp-CIRC-21}.}
\label{fig:normal-interval-exchange}
\end{figure}

In the transformation from our paper~\cite{Epp-CIRC-21}, the integers in the range $[0,M-1]$ of an integer interval exchange transformation are represented as the sequence of $M$ crossings of a normal curve, along a central horizontal edge of a triangulated surface. Each of these crossing points $x$ is connected by the normal curve, along a path of exactly $s$ segments for a number $s$ determined as part of the construction, to its image $\mu(x)$ according to the integer interval exchange transformation. Following this construction, the iterated interval exchange transformation problem can then be reduced to finding the crossing point that is $sk$ steps ahead of $x$ along the normal curve. This problem, of tracing paths for a given number of steps on a normal curve, has been given a polynomial-time solution by Erickson and Nayyeri~\cite{EriNay-DCG-13}. It follows that the iterated interval exchange transformation problem can also be solved in polynomial time. More precisely, the time is $O(m^2\log M)$, after an initial step in which the input parameter $k$ is reduced modulo the total number of steps in (a component of) the normal curve~\cite{Epp-CIRC-21}. As an integer division of a $\log k$-bit number by a $\log M$-bit number, this reduction step can be performed in time $O(\log k\log M)$ using naive division algorithms.

\section{Partial integer interval exchange}

Reflection in a mirror is a reversible transformation on systems of rays, but absorption by a non-reflective surface is not: many different rays could be absorbed at the same point. To mimic this non-reversibility in an integer exchange problem, while still allowing the polynomial-time procedure from our previous paper to apply, it is convenient to generalize the integer interval exchange problem to allow transformations that are only partially defined, as follows.

We define a \emph{partial integer interval exchange transformation}, for the range $[0,M-1]$, to be a system of disjoint subintervals of this range, together with a transformation that offsets each of these subintervals to another system of disjoint subintervals (necessarily of equal lengths). For instance, by omitting the subinterval $[6]$ from the previous example, we obtain a partial integer exchange transformation  that maps that maps $[0,3]\mapsto [11,14]$, $[4,5]\mapsto [0,1]$, and $[7,14]\mapsto [2,9]$. The \emph{domain} of the transformation is the union of the given subintervals, and the \emph{codomain} is the union of their target subintervals. This example has domain $[0,5]\cup[7,14]$ and codomain $[0,9]\cup[11,14]$.

\begin{lemma}
\label{lem:unco-undom}
If a partial integer interval exchange transformation is repeatedly applied to an input $x$ that does not belong to the codomain, it eventually reaches a transformed value that does not belong to the domain.
\end{lemma}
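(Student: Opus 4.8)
The plan is to exploit the fact that a partial integer interval exchange transformation acts as an \emph{injection} from its domain onto its codomain. Since the transformation carries a system of disjoint subintervals to another system of disjoint subintervals by offsets that preserve lengths, distinct integers of the domain are sent to distinct integers; thus, viewed as a partial function $\mu$ on the finite set $[0,M-1]$, it is one-to-one wherever it is defined, and its image is exactly the codomain.

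Given this, I would argue by contradiction, supposing that starting from some $x$ not in the codomain every iterate $\mu^{(k)}(x)$ is defined, i.e.\ the orbit $x,\mu(x),\mu^{(2)}(x),\dots$ never leaves the domain. Because $[0,M-1]$ is finite, the pigeonhole principle forces a repetition: there are indices $i<j$ with $\mu^{(i)}(x)=\mu^{(j)}(x)$, and I choose $j$ minimal. If $i\ge 1$, then $\mu^{(i-1)}(x)$ and $\mu^{(j-1)}(x)$ both lie in the domain and share the image $\mu^{(i)}(x)=\mu^{(j)}(x)$, so injectivity gives $\mu^{(i-1)}(x)=\mu^{(j-1)}(x)$, contradicting the minimality of $j$. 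Hence the first repeat must occur at $i=0$, giving $x=\mu^{(j)}(x)$ for some $j\ge 1$; but $\mu^{(j)}(x)$ is the image of a point under at least one application of $\mu$ and so lies in the codomain, contradicting the assumption that $x$ does not lie in the codomain. Thus the orbit cannot remain in the domain forever, and some iterate $\mu^{(k)}(x)$ leaves the domain, which is the claim.

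I do not anticipate a genuine obstacle here: the entire content is the observation that a length-preserving offset is injective, so a non-codomain element begins a simple chain in the functional graph of $\mu$, a graph of in-degree and out-degree at most one. The only point requiring a little care is ruling out that the orbit runs into a cycle without escaping: injectivity forbids any vertex from having two distinct preimages, so the orbit of $x$ cannot merge into a cycle it does not already contain, and since nothing maps to $x$ (as $x$ is not in the codomain) it contains no cycle at all. Every such orbit is therefore a path that must terminate at a value outside the domain.
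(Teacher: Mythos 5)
Your proof is correct and rests on the same key facts as the paper's: the transformation is injective where defined, so its functional graph has in- and out-degree at most one, and a point outside the codomain starts a path that must terminate outside the domain. The paper phrases this as a decomposition of the finite functional graph into disjoint paths and cycles rather than via your pigeonhole-plus-injectivity contradiction, but the argument is essentially identical.
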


\begin{proof}
Consider the directed graph that connects each value to its transformed image. This graph has in-degree and out-degree at most one at each vertex, and has finitely many vertices, so it consists of a disjoint union of directed paths and directed cycles. An input $x$ that does not belong to the codomain has no incoming edge, so it is the starting vertex of a path, and is eventually transformed into the ending vertex of the same path, a value that does not belong to the domain.
\end{proof}

We define the \emph{iterated partial integer interval exchange transformation problem} to be a computational task that takes as input the description of a partial integer interval exchange transformation (as a system of subintervals and their targets) and a value $x$ that does not belong to the codomain, and that produces the corresponding value that does not belong to the domain, according to \cref{lem:unco-undom}.

\begin{lemma}
\label{lem:partial-alg}
The iterated partial integer interval exchange transformation problem can be solved in time $O(m^2\log M+\log^2 M)$, polynomial in the input representation size.\end{lemma}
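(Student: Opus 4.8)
The plan is to reduce the iterated *partial* interval exchange problem to the iterated (total) interval exchange problem that we already know how to solve in time $O(m^2\log M)$, together with the reduction step costing $O(\log^2 M)$ for handling the iteration count. The key observation is that a partial transformation is just a total transformation with some values left unpaired; to apply the earlier machinery I need to extend the partial transformation to a genuine bijection on $[0,M-1]$ without changing the fate of an input $x$ that lies outside the codomain.

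First I would identify the points of $[0,M-1]$ that are ``missing'': the values outside the domain (which have no image under the partial transformation) and the values outside the codomain (which have no preimage). By \cref{lem:unco-undom}, starting from $x\notin\text{codomain}$ and iterating, the orbit is a finite directed path whose terminal vertex lies outside the domain. The natural idea is to complete the partial transformation to a total interval exchange $\mu$ by pairing up the ``undomain'' intervals (those in $[0,M-1]$ but not in the domain) with the ``uncodomain'' intervals (those not in the codomain) in some consistent order, so that $\mu$ becomes a bijection and its subintervals still number $O(m)$. Since the total lengths of domain and codomain are equal, the total lengths of their complements are also equal, so such a pairing exists; I would sort the gap intervals of the domain and of the codomain and match them greedily by cumulative length, splitting where necessary, which keeps the number of intervals linear in $m$.

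The crux is to argue that this completion does not interfere with the orbit we care about. The point is that once the iterated partial transformation reaches a value $y$ outside the domain, in the completed transformation $\mu$ that value $y$ is exactly one of the newly-added ``undomain'' starting points, which under $\mu$ gets sent into the ``uncodomain'' region; conversely the first time the orbit enters the completion edges is precisely when it would have terminated in the partial problem. Thus the terminal value we seek is the last point of the original orbit before $\mu$ first uses a completion pairing. Equivalently, running $\mu$ from $x$ produces a single orbit that alternates between genuine partial steps and completion steps, and I would set up the completion so that each maximal run of genuine steps corresponds to one partial orbit; the answer is the value reached just before the first completion step.

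To extract that value algorithmically, I would phrase it as a bounded search: the number of genuine partial steps before termination is at most $M$, but we cannot afford to simulate them one at a time. Instead I would apply the iterated (total) interval exchange solver to compute $\mu^{(k)}(x)$ for the specific $k$ equal to the length of the partial orbit, which I can determine because the completed transformation's orbit structure lets me read off, via the normal-curve correspondence of the previous section, how many steps precede the first completion edge. The main obstacle I anticipate is precisely this bookkeeping: ensuring the completion is engineered so that ``first completion step'' is detectable within the normal-curve path-tracing framework of Erickson and Nayyeri without incurring more than the claimed $O(m^2\log M)$ overhead, and verifying that the modular reduction of the step count (the source of the $O(\log^2 M)$ term) remains valid when the orbit is a path rather than a cycle. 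Handling the path-versus-cycle distinction cleanly, so that the reduction-modulo-cycle-length step from the total case is either avoided or correctly adapted, is where the care will be needed.
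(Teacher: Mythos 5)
Your high-level strategy---complete the partial transformation to a total bijection and invoke the known $O(m^2\log M)$ algorithm---matches the paper's, but the specific completion you choose (directly pairing the gap intervals of the domain with those of the codomain) leaves the central difficulty unsolved, and you flag this yourself. Once $\mu$ is a bijection, its orbit through $x$ is a cycle that continues past the value $y\notin\mathrm{domain}(f)$ that you actually want; to use the total solver you must know the exact iteration count $k$ at which the orbit first traverses a completion edge, and the value $\mu^{(j)}(x)$ for a given $j$ does not by itself reveal whether a completion edge was used among the first $j$ steps. ``Read it off via the normal-curve correspondence'' is precisely the step that needs an argument: nothing in the Erickson--Nayyeri machinery as invoked here hands you the index of the first crossing of a designated edge, and a binary search on $j$ has no obvious test to drive it.

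The paper's construction avoids having to know $k$ at all. Instead of pairing domain gaps directly with codomain gaps, it routes the completion through a delay line on an enlarged range $[0,Mm+M+m-1]$: the $m$ missing domain values map into $[M,M+m-1]$, the block $[M,Mm+M-1]$ maps to $[M+m,Mm+M+m-1]$ (so each step inside the delay line adds $m$, and traversing it takes exactly $M$ iterations), and only $[Mm+M,Mm+M+m-1]$ maps back to the codomain gaps. Since by \cref{lem:unco-undom} the genuine orbit terminates in fewer than $M$ steps, after exactly $M$ iterations of the completed map $\bar f$ the point is guaranteed to still lie inside the delay line, and its residue modulo $m$ identifies the entry point $y'\in[M,M+m-1]$, whence the desired answer is $y=\bar f^{-1}(y')$. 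Thus the total solver is called with the fixed, known iteration count $M$ and the answer is decoded from its output by a congruence---no detection of a ``first completion step'' is required. To rescue your version you would need an analogous marking or delay mechanism (or a genuinely new way to locate the first completion edge); as written, the proof has a gap exactly where you anticipated the obstacle.
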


\begin{proof}
We transform the problem in polynomial time into an equivalent instance of the (non-partial) iterated integer interval exchange transformation problem. Let $I_1, I_2, \dots$ be the intervals of the given partial transformation $f$, and let $f(I_1)$ etc. denote their images after the transformation. Suppose also that the given partial transformation operates on the range $[0,M-1]$ of length $M$. Let $m$ denote the total number of elements in this range that are missed by $f$: they are not in its domain. We define a new transformation $\bar f$ that operates on the range $[0,Mm+M+m-1]$ of length $Mm+M+m$, as follows:
\begin{itemize}
\item For each subinterval $I_i$ in the given transformation, with image $f(I_i)$, we include in $\bar f$ the mapping $I_i\mapsto f(I_i)$.
\item For each maximal subinterval $J_i$ of $[0,M-]\setminus\cup I_i$ (a subinterval not in the domain of $f$) we include in $\bar f$ a mapping from $J_i$ to a subinterval of $[M,M+m-1]$, so that the images of these subintervals are disjoint and completely cover $[M,M+m-1]$.
\item We include in $\bar f$ the mapping $[M,Mm+M-1]\mapsto [M+m,Mm+M+m-1]$. Iterating this mapping eventually transforms any value in $[M,M+m-1]$ to a value in $[Mm+M,Mm+M+m-1]$, but it takes $M$ iterations to do so.
\item For each maximal subinterval $K_i$ of $[0,M-]\setminus\cup f(I_i)$ (a subinterval not in the codomain of $f$) we include in $\bar f$ a mapping from a subinterval of $[Mm+M,Mm+M+m-1]$ to $K_i$, so that the preimages of these mappings are disjoint and completely cover $[Mm+M,Mm+M+m-1]$.
\end{itemize}
For instance, for the example partial integer interval exchange transformation $f$ given above, $M=15$ and $m=1$ (there is only one missing value from the transformation), and we have $\bar f$ mapping $[0,3]\mapsto [11,14]$, $[4,5]\mapsto [0,1]$, $[7,14]\mapsto [2,9]$; $[6]\mapsto [15]$; $[15,29]\mapsto [16,30]$; and $[30]\mapsto [10]$.

Suppose we apply the algorithm to the iterated integer interval exchange transformation problem, with transformation $\bar f$, on an input value $x$ that does not belong to the codomain, and that the output of this algorithm is a value $z$.
If we iterate $\bar f$ for a total of $M$ iterations, starting with a value $x$ that does not belong to the codomain, it will reach a value $y$ that does not belong to the domain in fewer than $M$ iterations, by \cref{lem:unco-undom}. The next iteration will map $y$ into a value $y'$ in the subinterval $[M,M+m-1]$, and the subsequent (again fewer than $M$) iterations will each add $m$ to this value $y'$. We may therefore obtain $y'$ by $z$ as the unique value in the subinterval $[M,M+m-1]$ that is congruent to $z$ modulo $m$. From $y'$, we may obtain the desired value $y$ as $\bar f^{-1}(y')$.

The time bound is obtained by plugging in the number of pieces of the resulting transformation, $O(m)$, the range of its values, $O(Mm)$, and the number of iterations, $O(M)$,  into the previous time bound for iterated integer interval exchange transformations.
\end{proof}

\section{Converting reflection to partial integer interval exchange}

The reason that we have restricted our attention to reflections in line segments that are horizontal, vertical, and diagonal is that these kinds of reflections preserve the integrality of the reflected rays. We formalize these observations as follows.

\begin{lemma}
\label{lem:8reflex}
If a ray whose direction is specified by a vector $(x,y)$ is reflected by a sequence of horizontal, vertical, or diagonal mirrors, then the resulting ray's direction can be specified by one of the eight vectors $(\pm x,\pm y)$ or $(\pm y,\pm x)$.
\end{lemma}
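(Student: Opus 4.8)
The plan is to observe that, as far as the \emph{direction} of a ray is concerned, each admissible reflection acts by an orthogonal linear map, and that the four possible such maps all lie in the dihedral symmetry group of the square, whose eight elements carry $(x,y)$ to exactly the eight listed vectors.

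First I would reduce to the linear part of each reflection. A mirror is an affine line, not necessarily through the origin, so reflection across it is an affine map $v \mapsto Av + b$; but the direction of the reflected ray is $Ad$ for an incoming direction $d$, independent of the translation $b$. Thus only the linear parts $A$ matter for the statement, and I may treat each mirror as if it passed through the origin. This is the one point that requires a little care, since it is what allows a sequence of reflections about \emph{different} lines to still land in a single fixed eight-element group rather than in some larger set depending on the mirror positions.

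Next I would compute these linear parts for the four mirror orientations. Reflection in a horizontal line sends $(x,y) \mapsto (x,-y)$; in a vertical line, $(x,y) \mapsto (-x,y)$; in the diagonal of slope $+1$, $(x,y) \mapsto (y,x)$; and in the diagonal of slope $-1$, $(x,y) \mapsto (-y,-x)$. Each of these is a $2\times 2$ matrix with a single $\pm 1$ in every row and in every column, a signed permutation matrix, and each lies in the dihedral group $D_4$ of the eight symmetries of the square.

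Since $D_4$ is a group, it is closed under composition, so any finite sequence of these four reflections composes to a single element of $D_4$. It then remains only to check that applying each of the eight elements of $D_4$ to a vector $(x,y)$ produces one of $(\pm x,\pm y)$ or $(\pm y,\pm x)$, and conversely that all eight of these vectors arise; this is an immediate enumeration. The argument therefore has essentially no hard step once the reduction to linear parts is made: the substance is the recognition that the admissible reflections generate a group of order eight acting by signs and coordinate swaps, after which the conclusion is a direct case count.
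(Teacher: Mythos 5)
Your proof is correct and follows essentially the same route as the paper: identify the action of each mirror orientation on the direction vector (negate a coordinate, or swap the coordinates with possible sign changes) and observe that the set of eight vectors $(\pm x,\pm y)$, $(\pm y,\pm x)$ is closed under these actions. Your group-theoretic framing via $D_4$ and the explicit reduction from affine reflections to their linear parts are just more formal renderings of the closure argument the paper states directly, so there is no substantive difference.
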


\begin{figure}[t]
\centering
\includegraphics[width=\columnwidth]{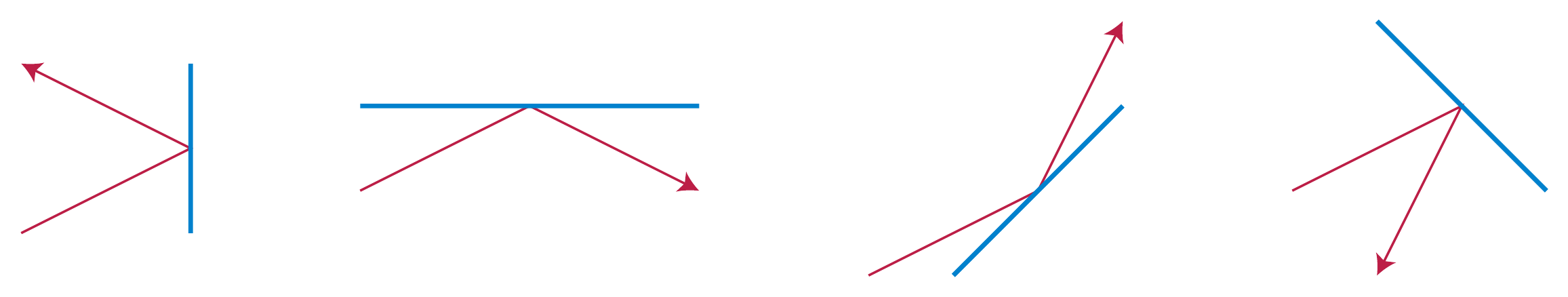}
\caption{Cases for reflection of a ray from a horizontal, vertical, or diagonal mirror}
\label{fig:reflex-cases}
\end{figure}

\begin{proof}
Vertical mirrors negate the first coordinate, horizontal mirrors negate the second coordinate, and diagonal mirrors exchange the two coordinates (possibly also negating both of them); see \cref{fig:reflex-cases}. The set of eight vectors of the lemma are preserved by these operations, so no other direction is possible.
\end{proof}

Along with these eight directions, it will also be important to keep track of the left-right orientation of each ray; that is, whether an image that follows a thickened copy of the ray has its orientation preserved or flipped. This changes on each reflection, and cannot be determined only from the ray's direction: a ray that reflects perpendicularly from a mirror will have its orientation changed, whereas a ray that returns from a corner-reflector (two perpendicular mirrors) will not.

\begin{lemma}
\label{lem:reflex-integer}
If a ray that passes through infinitely many points of the integer grid is reflected by a sequence of horizontal, vertical, or diagonal mirrors, each with integer endpoints, then the reflected ray again passes through infinitely many points of the integer grid.
\end{lemma}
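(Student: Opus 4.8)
The plan is to reduce to the case of a single reflection by induction on the number of mirrors, and then to show that one reflection carries the line supporting the ray to another line that again passes through infinitely many lattice points. The key fact I would use is the elementary characterization that a line passes through infinitely many points of the integer grid if and only if it passes through at least two of them; in that case the lattice points on the line form a doubly-infinite arithmetic progression whose common difference is a primitive integer vector. Thus it suffices to exhibit, on the reflected ray, two lattice points (equivalently, one lattice point together with the integer direction already guaranteed by \cref{lem:8reflex}).

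First I would record that, because each mirror has integer endpoints and is horizontal, vertical, or diagonal, the full line supporting the mirror can be written as $x=c$, $y=c$, $x-y=c$, or $x+y=c$ for some \emph{integer} constant $c$. Reflection across such a line is an affine involution $R$ given, in the four cases, by $(u,v)\mapsto(2c-u,v)$, $(u,v)\mapsto(u,2c-v)$, $(u,v)\mapsto(v+c,u-c)$, and $(u,v)\mapsto(c-v,c-u)$. Each of these has an integer linear part and an integer translation, so $R$ carries $\mathbb{Z}^2$ bijectively onto $\mathbb{Z}^2$.

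The main step is then immediate. If the incoming ray lies along a line $L$ containing infinitely many lattice points and reflects at a point $q$ on the mirror, then the reflected ray lies along the image line $R(L)$: this line passes through $R(q)=q$ with the reflected direction, which is exactly the outgoing half-line. Since $R$ preserves the integer lattice and is affine, the images of the arithmetic progression of lattice points of $L$ form an arithmetic progression of lattice points on $R(L)$, unbounded in both directions. The reflected ray is a half-line of $R(L)$ starting at $q$, and such a half-line meets infinitely many points of a doubly-infinite progression. Applying this once per mirror and composing (each composite of lattice-preserving affine maps again preserves the lattice) completes the induction.

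I would expect the only genuine subtlety to be that the reflection point $q$ need not itself be a lattice point, so one cannot simply track the grid points met by the ray as it passes through the bounce. The device that sidesteps this is to argue about the entire supporting line rather than the bounce point: lattice-preservation of $R$ is a property of the whole reflection map, independent of where along the mirror the ray happens to strike, and it is precisely the integrality of the offset $c$ — a consequence of the restricted orientation together with integer endpoints — that makes $R$ lattice-preserving.
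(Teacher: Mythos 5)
Your proof is correct and rests on the same key fact as the paper's: reflection across the supporting line of a horizontal, vertical, or diagonal mirror with integer endpoints is an affine involution carrying $\mathbb{Z}^2$ onto $\mathbb{Z}^2$, so the reflected ray inherits infinitely many lattice points from the image of the incoming line. The paper states this in one sentence (``the reflection of the grid is an integer grid with the same points''); you simply make the lattice-preservation explicit via the four coordinate formulas, which is a harmless elaboration rather than a different route.
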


\begin{proof}
When a ray is reflected by a mirror, it passes after the reflection through the same sequence of grid points that the unreflected ray would pass through in the reflection of the grid. But with mirrors of the type described by the lemma, the reflection of the grid is an integer grid with the same points.
\end{proof}

\begin{figure}[t]
\centering\includegraphics[width=\columnwidth]{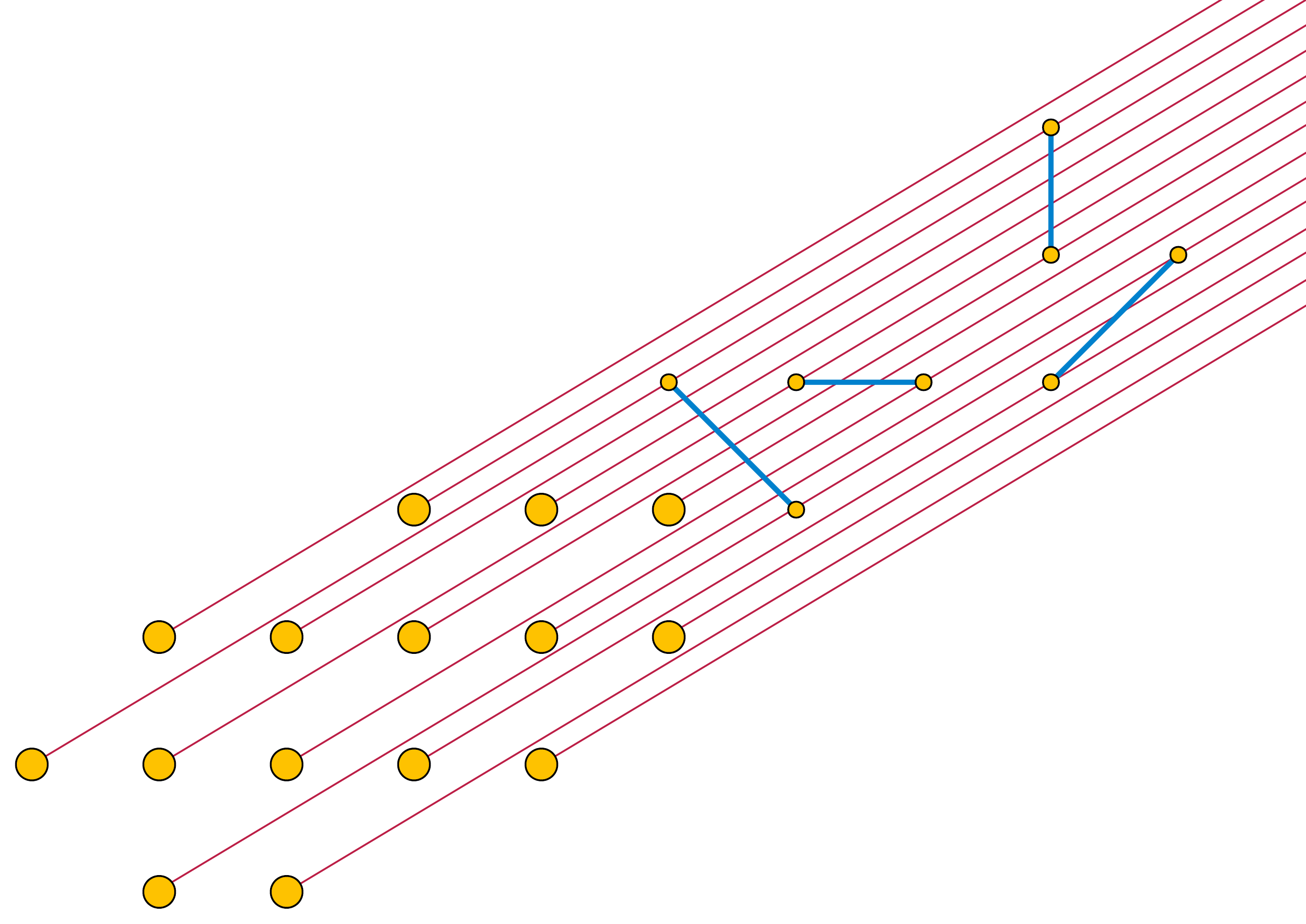}
\caption{Subdivision of grid segments and diagonals (blue) by the system of all lines through integer points in the direction of the vector $(5,3)$ (red).}
\label{fig:grid-subdiv}
\end{figure}

\begin{observation}
\label{obs:grid-subdiv}
Let $\mathcal{L}$ be the system of all lines in the plane that pass through integer points in the direction of a vector $(x,y)$, where $x$ and $y$ are integers in lowest terms (their greatest common divisor is one). Then the lines of $\mathcal{L}$ subdivide each vertical unit segment of  the integer grid into $x$ equal-length pieces, and each horizontal unit segment of the grid into $y$ equal-length pieces. They subdivide each diagonal segment of the grid, of length $\sqrt n$ with slope of the same sign as the slope of the lines in $\mathcal{L}$, into $|x-y|$ equal-length pieces, and each diagonal segment of the opposite slope into $|x+y|$ equal -length pieces. (See \cref{fig:grid-subdiv}.)
\end{observation}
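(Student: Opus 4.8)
The plan is to realize $\mathcal{L}$ as the integer level sets of a single linear functional, after which every assertion reduces to computing how much that functional changes along a unit segment. I would set $V(u,v)=yu-xv$, the dot product of $(u,v)$ with the vector $(y,-x)$ orthogonal to $(x,y)$; then $V$ is constant exactly along lines in direction $(x,y)$, so every member of $\mathcal{L}$ is some level set $V^{-1}(c)$. The first step is to identify \emph{which} level sets occur. A line through an integer point $(a,b)$ has $V$-value $ya-xb\in\mathbb{Z}$, and conversely, because $\gcd(x,y)=1$, B\'ezout's identity gives $\{ya-xb:a,b\in\mathbb{Z}\}=\mathbb{Z}$, so every integer $c$ is attained at some integer point. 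Hence $\mathcal{L}$ is \emph{precisely} the family $\{V^{-1}(c):c\in\mathbb{Z}\}$ of equally spaced parallel lines.

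Next I would record the counting principle. Each unit grid segment runs between two integer grid points, at which $V$ takes integer values; since $V$ is affine it is monotone along the segment, so the lines of $\mathcal{L}$ meeting the relatively open segment are exactly the $V^{-1}(c)$ with $c$ strictly between the two endpoint values. If those endpoint values differ by $\Delta$, there are $|\Delta|-1$ interior crossings, cutting the segment into $|\Delta|$ pieces, and these pieces are equal in length because equally spaced values of the affine function $V$ correspond to equally spaced points along the segment.

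It then remains to evaluate $\Delta$ on one representative of each segment type. Along the vertical segment from $(0,0)$ to $(0,1)$ we get $\Delta=V(0,1)-V(0,0)=-x$, hence $|x|$ pieces; along the horizontal segment to $(1,0)$, $\Delta=y$, hence $|y|$ pieces; along the slope-$(+1)$ diagonal to $(1,1)$, $\Delta=y-x$, hence $|x-y|$ pieces; and along the slope-$(-1)$ diagonal to $(1,-1)$, $\Delta=y+x$, hence $|x+y|$ pieces. These four identities require no assumption on the signs of $x$ and $y$; they are pure evaluation of $V$.

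The one delicate point, and the place where I expect the bookkeeping to need care, is matching the two diagonal directions to the wording of the statement. The slope of $\mathcal{L}$ is $y/x$, so the slope-$(+1)$ diagonal is the one whose slope agrees in sign with $\mathcal{L}$ exactly when $x$ and $y$ share a sign. I would therefore present the argument in the normalized case $x,y>0$ depicted in \cref{fig:grid-subdiv}, where the same-sign diagonal is the slope-$(+1)$ diagonal receiving $|x-y|$ pieces and the opposite diagonal is the slope-$(-1)$ diagonal receiving $|x+y|$ pieces, exactly as stated; the remaining sign patterns follow from the reflective symmetries of the integer grid in the coordinate axes, which send $\mathcal{L}$ to the analogous family for the reflected direction and interchange the two diagonal families together with the expressions $|x-y|$ and $|x+y|$. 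The degenerate cases $x=\pm y$ or $x=0$ or $y=0$ are consistent: there the relevant diagonal lies along a line of $\mathcal{L}$ and the corresponding count $|x-y|$ or $|x+y|$ correctly degenerates to $0$.
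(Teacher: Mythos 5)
The paper offers no proof of this Observation at all---it is stated bare, with only a pointer to the figure---so there is nothing to match your argument against; what matters is whether your proof is correct, and it is. Realizing $\mathcal{L}$ as the integer level sets of $V(u,v)=yu-xv$, using B\'ezout to see that \emph{every} integer level occurs (this is exactly where the hypothesis $\gcd(x,y)=1$ enters, and you are right to isolate it), and then counting interior crossings of a unit segment as the integers strictly between the endpoint values of $V$ is a clean and complete route to all four counts; the equal spacing of the pieces falls out of the affinity of $V$ for free. You are also right to flag the one genuinely delicate point, which the paper's wording glosses over: the literal formulas $|x-y|$ for the same-sign diagonal and $|x+y|$ for the opposite one are correct only when $x$ and $y$ have the same sign (the invariant statement is $\bigl||x|-|y|\bigr|$ and $|x|+|y|$ respectively), and your reduction to the normalized case $x,y>0$ via the axis reflections, which swap the two diagonal families and the two expressions simultaneously, is the honest way to reconcile the general case with the statement. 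Your remark on the degenerate cases $x=\pm y$, $x=0$, $y=0$ is a sensible additional check. No gaps.
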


Putting these observations together, we can transform any octagonal mirror maze into an equivalent partial integer interval exchange transformation.

\begin{lemma}
\label{lem:equiv-partial}
Suppose we are given an environment, described as a collection of line segments, each side of which may be marked as reflective or non-reflective, with integer endpoints, an integer position for the start of a light ray, and an integer vector describing the initial direction of the light ray. Then in polynomial time we can construct a partial integer interval exchange transformation $f$ whose values correspond to points of the environment (either the starting point or points along the segments of the environment) and directions of a reflected light ray emanating from that point, such that:
\begin{itemize}
\item If a value $v$ belongs to the domain of $f$, then the ray described by $v$ is eventually reflected by the environment, with its first reflection at a position and direction described by $f(v)$.
\item If a value $v$ does not belong to the domain of $f$, then the ray described by $v$ hits an absorbing barrier before being reflected, or escapes to infinity.
\end{itemize}
\end{lemma}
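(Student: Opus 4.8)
The plan is to let the integers of the transformation index the \emph{states} of the ray, where a state records a point at which the ray meets the environment, together with the direction in which the ray is travelling and a left--right orientation bit. First I would reduce the initial direction to a primitive vector $(x,y)$ with $\gcd(x,y)=1$. By \cref{lem:8reflex} and \cref{lem:reflex-integer}, every ray obtained by reflecting the initial ray travels in one of the eight directions $(\pm x,\pm y),(\pm y,\pm x)$ and lies on one of the lines of the system $\mathcal{L}$ of \cref{obs:grid-subdiv} (for the appropriate direction) through integer points. Consequently every point at which such a ray can meet a segment is one of the finitely many crossing points of $\mathcal{L}$ with that segment. I would list the states by concatenating one block per triple (segment, direction, orientation), reserving one extra index for the initial point and direction, and assigning the blocks consecutive ranges of integers in $[0,M-1]$. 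By \cref{obs:grid-subdiv} each segment meets $O(N)$ lines of $\mathcal{L}$ in each direction, so $M=O(nN)$; although $M$ is exponential in the input size, $\log M=O(\log n+\log N)$, which is all that the representation of the transformation requires. For bookkeeping I would fix, for each of the eight directions $d$, the normal $\nu_d$ obtained by rotating $d$ by $90^\circ$, and use the transverse coordinate $c(p)=\langle p,\nu_d\rangle$ to order the crossing points on each line family.

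Next I would define $f$ on a state $(p,d,o)$ by tracing the ray that leaves $p$ in direction $d$, locating the first segment $s'$ it meets, at a crossing point $p'$. If the side of $s'$ that is struck is reflective, the reflection produces a well-defined outgoing direction $d'$ and flips the orientation to $o'=1-o$, and I set $f(p,d,o)$ to be the index of $(p',d',o')$. If instead the side struck is non-reflective, or the ray meets no segment at all and escapes, then the state is left out of the domain of $f$. This partiality is exactly what makes $f$ a \emph{partial} interval exchange, and it realizes the two bullet points of the statement by construction.

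The heart of the argument, and the step I expect to be the main obstacle, is to verify that $f$ really is a union of translations of intervals of consecutive integers, rather than a more general piecewise-linear map. I would fix a source block $(s,d,o)$ and partition it into maximal runs of consecutive crossing points that share the same first-hit segment $s'$ and the same struck side, hence the same reflected data $(d',o')$. On such a run the straight portion of the trajectory keeps each ray on its own line of $\mathcal{L}$ and so preserves the coordinate $c$, carrying consecutive crossings of $s$ to consecutive crossings of $s'$. The delicate point is that the reflection itself reverses order: a direct computation (using that the linear part of a mirror reflection $A$ satisfies $\rho A=A\rho^{-1}$ for the quarter-turn $\rho$, so that $\nu_{d'}=-A\nu_d$) shows that for every mirror the coordinate transforms as $c\mapsto -c+\text{const}$, an affine map of slope $-1$ that is \emph{independent of the mirror}. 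This uniformity is precisely what the orientation bit is designed to absorb: I would index each block in order of increasing $c$ when $o=0$ and decreasing $c$ when $o=1$, so that the slope $-1$ of the reflection is cancelled by the accompanying flip of $o$, leaving each run mapped to a contiguous run of the target block by an honest translation. Injectivity of reflection then forces the image runs to be disjoint and of the same lengths as their domains, so the runs together form a legitimate partial integer interval exchange.

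Finally I would bound the construction cost. For each of the $O(n)$ segments and each of the eight directions, the first-hit segment, viewed as a function of position along the segment, is piecewise constant with only polynomially many pieces, since it is the outcome of a one-parameter visibility sweep cut by the $O(n)$ other segments; hence the total number $m$ of intervals is polynomial in $n$. The two endpoints of each interval and its translation offset are obtained from a constant number of line intersections and distance comparisons among segments, each an arithmetic operation on $O(\log N)$-bit integers, so the whole transformation is assembled in time polynomial in $n$ and $\log N$. The order-reversal analysis of the previous paragraph is the subtle part of the write-up; the enumeration, the visibility decomposition, and the counting are then routine.
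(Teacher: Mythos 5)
Your proposal takes essentially the same route as the paper: discretize the candidate reflection points via Observation~\ref{obs:grid-subdiv}, allocate one block of integers per (segment, direction, orientation) triple, split each block by the first object hit (a lower-envelope/visibility computation), map the reflective pieces and omit the absorbing or escaping ones from the domain. Your explicit verification that each piece maps by a translation---with the orientation bit absorbing the order reversal caused by each reflection---is correct and fills in a detail the paper's proof leaves implicit; the only slip is that each segment carries $O(N^2)$ crossing points (length $O(N)$ times $O(N)$ subdivisions per unit segment), not $O(N)$, which does not affect polynomiality.
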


\begin{proof}
We surround the given environment with a non-reflective bounding box that will catch all escaping rays. By \cref{lem:8reflex} and \cref{lem:reflex-integer}, we need only consider rays through integer points, in eight directions and two left-right orientations. By \cref{obs:grid-subdiv}, we need only consider a discrete evenly-spaced set of possible reflection points along each segment of the environment, of a size that can be described by an integer with polynomially many bits. (Recall that we are not restricting the lengths of our grid segments or our initial direction to have polynomial magnitude, only to be integers with a polynomial number of bits.) We will create a partial integer interval exchange transformation $f$ whose range is partitioned into subintervals, one for each combination of an environment segment, a direction of the emanating ray, and a left-right orientation, with the length of these subintervals obtained by multiplying the length of the segment by the number of reflection points per unit length given by \cref{obs:grid-subdiv}. For each of these subintervals, we further partition it into sub-subintervals, in polynomial time, according to the next object in the environment that a ray in that direction would hit, as depicted in \cref{fig:reflex-subintervals}. (This is simply a lower envelope of $n$ disjoint line segments, for a projection direction determined by the ray direction.) When the next object to be hit is reflecting, we map a sub-subintervals to the interval describing the corresponding reflection points along that object. When, instead, it is absorbing, we omit that sub-subinterval from the partial integer interval exchange transformation.
\end{proof}

\begin{figure}[t]
\includegraphics[width=\columnwidth]{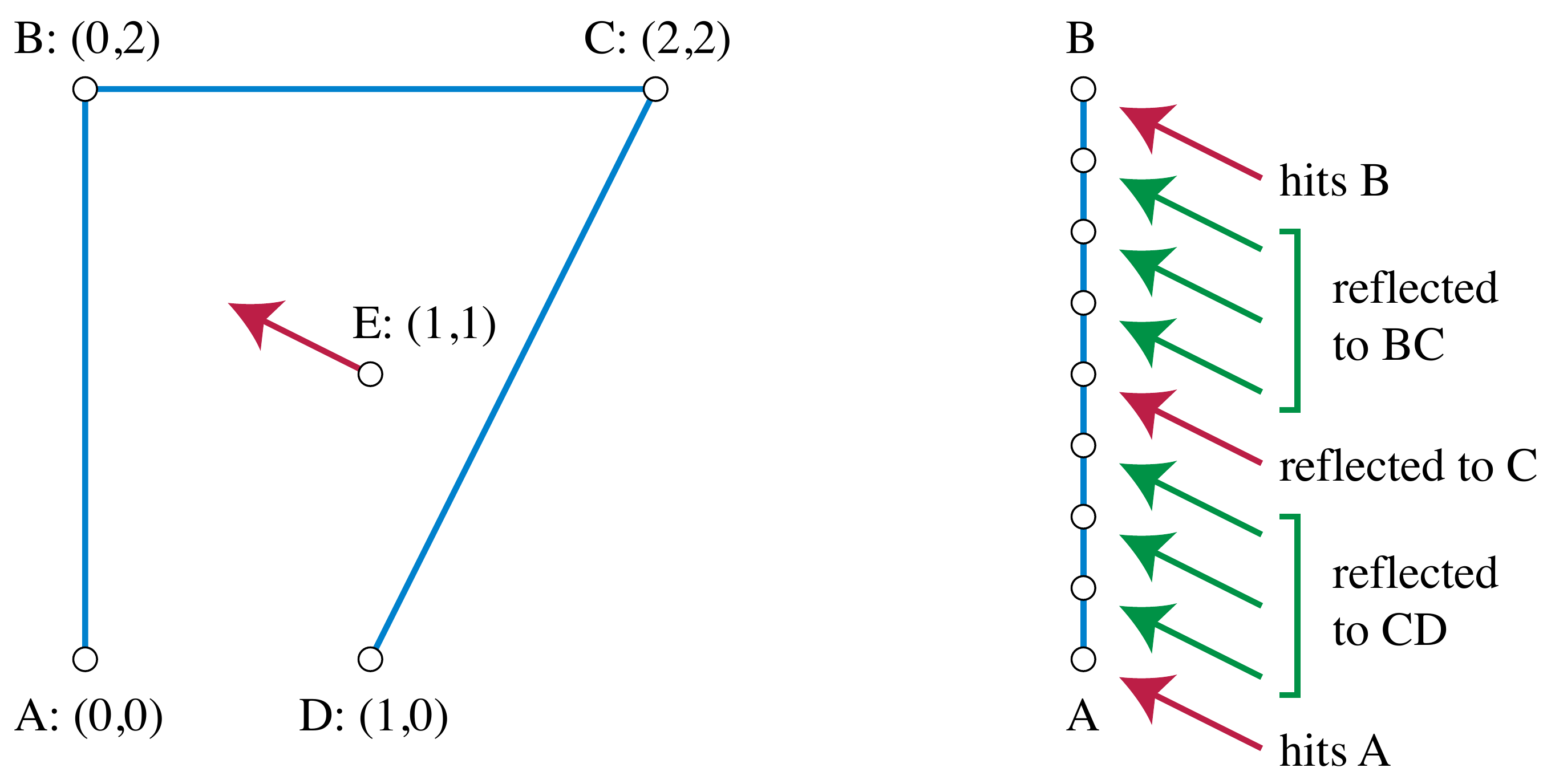}
\caption{Reflection points along segment $AB$ of a given environment, partitioned into subintervals according to
the next object in the reflected path for incoming rays of slope $-\tfrac12$.}
\label{fig:reflex-subintervals}
\end{figure}

\section{The main result}

Putting these components together, we have the following result:

\begin{theorem}
Suppose we are given an environment, described as a collection of line segments, each side of which may be marked as reflective or non-reflective, with integer endpoints, an integer position for the start of a light ray, and an integer vector describing the initial direction of the light ray. Then in time polynomial in the number of segments and in the number of bits needed to specify the integers of the input, we can determine whether the reflected ray is eventually absorbed or escapes to infinity. If it is absorbed, we can determine where it is absorbed, what direction it comes from when it is absorbed, and how many bounces it makes before this happens. If it escapes to infinity, we can determine its eventual escape path, and how many bounces it takes before reaching this path. The time bound for these algorithms is $O(n^2\log N+\log^2 N)$.
\end{theorem}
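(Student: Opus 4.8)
The plan is to assemble the machinery of the preceding sections: \cref{lem:equiv-partial} reduces ray-tracing to a partial integer interval exchange transformation, and \cref{lem:partial-alg} solves the associated iterated problem in polynomial time. First I would invoke \cref{lem:equiv-partial} to construct, in polynomial time, the partial transformation $f$ associated with the given environment, enclosed in a non-reflective bounding box as in the proof of that lemma. The initial ray, emanating from its integer starting position in its given direction and left-right orientation, corresponds to a single value $x$ in the range of $f$. Because the starting configuration is a source in free space rather than a reflection point, no reflection of the environment produces it, so $x$ does not belong to the codomain of $f$; this is exactly the hypothesis of \cref{lem:unco-undom}, which guarantees that repeated application of $f$ eventually reaches a value $y$ outside the domain of $f$.

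Next I would run the algorithm of \cref{lem:partial-alg} on the instance $(f,x)$ to obtain this terminal value $y$. By the correspondence established in \cref{lem:equiv-partial}, every value encodes a point of the environment together with a ray direction and orientation, and a value outside the domain is precisely one whose ray either strikes an absorbing barrier or escapes. Decoding $y$ therefore recovers the fate of the ray: I would read off the segment that the final ray strikes, together with the point and incoming direction of that strike. If this segment is one of the original (absorbing) segments of the environment, the ray is absorbed there; if it is a side of the artificial bounding box, the ray escapes, and extending the decoded ray past the box yields its infinite escape path.

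It remains to report the number of bounces, which equals the number $t$ of applications of $f$ carrying $x$ to $y$. Conveniently, the construction inside \cref{lem:partial-alg} records $t$ implicitly. After reaching $y$, the enlarged (total) transformation $\bar f$ maps $y$ to a value $y'\in[M,M+m-1]$ and then repeatedly adds $m$; so the value $z=\bar f^{(M)}(x)$ computed by that algorithm satisfies $z = y' + (M-1-t)m$, where $y'$ is the residue of $z$ in $[M,M+m-1]$ that the algorithm already extracts to recover $y$. Hence $t = M-1-(z-y')/m$ is obtained by a single division, at no additional asymptotic cost. The overall running time then follows by substituting into \cref{lem:partial-alg} the parameters of $f$: the number of pieces is $O(n)$, governed by the complexity of the lower envelope of the $n$ segments in each of the constantly many ray directions, and the range satisfies $\log M = O(\log N)$, since each segment contributes $O(N^2)$ candidate reflection points per direction by \cref{obs:grid-subdiv}. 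These give the claimed bound $O(n^2\log N+\log^2 N)$.

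I expect the main obstacle to be the bookkeeping that makes the decoding of $y$ faithful and the bounce count exact, rather than any new idea. Two subtleties deserve care. First, the orientation flag must be tracked through every reflection: as noted after \cref{lem:8reflex}, a perpendicular bounce and a corner-reflector return yield the same direction but opposite orientations, so each value of $f$ must carry this bit and the sub-subinterval maps of \cref{lem:equiv-partial} must update it correctly. Second, the ``return to the initial position and orientation'' stopping condition must be reconciled with the source property of $x$; a clean way is to declare the starting point absorbing for any ray that later returns to it, so that such a return manifests as reaching a value outside the domain rather than as a cycle through $x$, keeping \cref{lem:unco-undom} applicable.
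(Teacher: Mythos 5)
Your proposal is correct and follows essentially the same route as the paper: construct the partial integer interval exchange transformation via \cref{lem:equiv-partial}, solve the iterated problem with \cref{lem:partial-alg}, and plug in the parameters $O(n)$ pieces and range $O(N^2)$ to get $O(n^2\log N+\log^2 N)$. Your extra bookkeeping for the bounce count (recovering $t$ from $z=y'+(M-1-t)m$ inside the reduction of \cref{lem:partial-alg}) is a correct elaboration of a detail the paper's proof leaves implicit.
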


\begin{proof}
We convert the input to an equivalent partial integer interval exchange transformation according to \cref{lem:equiv-partial}, and then apply the polynomial-time algorithm for the iterated partial integer interval exchange transformation problem of \cref{lem:partial-alg}.

An input of size $n$ and coordinate magnitude $N$ can be converted into a partial integer interval exchange transformation whose number of subintervals is $O(n)$ (each comes from a trapezoid in four trapezoidal decompositions with sides parallel to one of the directions of the reflected rays) and whose range is $O(N^2)$ (combining the magnitude of the environment coordinates with the number of reflection points along each grid segment). For inputs of this size, the time to apply an algorithm for the iterated partial integer interval exchange transformation is $O(n^2\log N+\log^2 N)$.
\end{proof}

It would be of interest to determine to what extent this algorithm can be generalized. Can we determine the geometric length of the reflected path of a light ray, and not just its number of bounces? Are there other systems of slopes, beyond the axis-aligned and diagonal slopes, for which similar algorithms can work? How does the complexity of the algorithm depend on the system of slopes? For slopes that do not preserve the rationality of reflected rays, what can be said about the computational complexity of the problem?

\bibliographystyle{plainurl}
\bibliography{octagonal}

\begin{thebibliography}{10}

\bibitem{AanBisPal-EuroCG-08}
Mridul Aanjaneya, Arijit Bishnu, and Sudebkumar~Prasant Pal.
\newblock {Directly visible pairs and illumination by reflections in orthogonal
  polygons}.
\newblock In Sylvain Petitjean, editor, {\em Collection of abstracts of the
  24th European Workshop on Computational Geometry}, pages 241{--}244.
  INRIA-LORIA, March 18--20 2008.
\newblock URL:
  \url{https://physbam.stanford.edu/~aanjneya/mridul_files/papers/vis.pdf}.

\bibitem{AroDavDey-DCG-98b}
Boris Aronov, Alan~R. Davis, Tamal~K. Dey, Sudebkumar~P. Pal, and D.~Chithra
  Prasad.
\newblock {Visibility with multiple reflections}.
\newblock {\em Discrete {\&} Computational Geometry}, 20(1):61{--}78, 1998.
\newblock \href {http://dx.doi.org/10.1007/PL00009378}
  {\path{doi:10.1007/PL00009378}}.

\bibitem{AroDavDey-DCG-98a}
Boris Aronov, Alan~R. Davis, Tamal~K. Dey, Sudebkumar~P. Pal, and D.~Chithra
  Prasad.
\newblock {Visibility with one reflection}.
\newblock {\em Discrete {\&} Computational Geometry}, 19(4):553{--}574, 1998.
\newblock \href {http://dx.doi.org/10.1007/PL00009368}
  {\path{doi:10.1007/PL00009368}}.

\bibitem{Epp-CIRC-21}
David Eppstein.
\newblock {The complexity of iterated reversible computation}.
\newblock Electronic preprint arxiv:2112.11607, 2021.

\bibitem{EriNay-DCG-13}
Jeff Erickson and Amir Nayyeri.
\newblock {Tracing compressed curves in triangulated surfaces}.
\newblock {\em Discrete {\&} Computational Geometry}, 49(4):823{--}863, 2013.
\newblock \href {http://dx.doi.org/10.1007/s00454-013-9515-z}
  {\path{doi:10.1007/s00454-013-9515-z}}.

\bibitem{GhoGosMah-VC-12}
Subir~Kumar Ghosh, Partha~P. Goswami, Anil Maheshwari, Subhas~C. Nandy,
  Sudebkumar~Prasant Pal, and Swami Sarvattomananda.
\newblock {Algorithms for computing diffuse reflection paths in polygons}.
\newblock {\em The Visual Computer}, 28(12):1229{--}1237, 2012.
\newblock \href {http://dx.doi.org/10.1007/s00371-011-0670-z}
  {\path{doi:10.1007/s00371-011-0670-z}}.

\bibitem{MahMohKou-CCCG-11}
Salma~Sadat Mahdavi, Ali Mohades, and Bahram Kouhestani.
\newblock {Computing $k$-link visibility polygons in environments with a
  reflective edge}.
\newblock In {\em Proceedings of the 23rd Annual Canadian Conference on
  Computational Geometry, Toronto, Ontario, Canada, August 10-12, 2011}, 2011.
\newblock URL: \url{https://www.cccg.ca/proceedings/2011/papers/paper63.pdf}.

\bibitem{OroPet-CCCG-01}
Joseph O'Rourke and Octavia Petrovici.
\newblock {Narrowing light rays with mirrors}.
\newblock In {\em Proceedings of the 13th Canadian Conference on Computational
  Geometry, University of Waterloo, Ontario, Canada, August 13-15, 2001}, pages
  137{--}140, 2001.
\newblock URL: \url{https://www.cccg.ca/proceedings/2001/orourke-13443.ps.gz}.

\bibitem{PraPalDey-CGTA-98}
D.~Chithra Prasad, Sudebkumar~P. Pal, and Tamal~K. Dey.
\newblock {Visibility with multiple diffuse reflections}.
\newblock {\em Computational Geometry: Theory and Applications},
  10(3):187{--}196, 1998.
\newblock \href {http://dx.doi.org/10.1016/S0925-7721(97)00021-7}
  {\path{doi:10.1016/S0925-7721(97)00021-7}}.

\bibitem{VaeGho-TCS-19}
Arash Vaezi and Mohammad Ghodsi.
\newblock {Visibility extension via mirror-edges to cover invisible segments}.
\newblock {\em Theoretical Computer Science}, 789:22{--}33, 2019.
\newblock \href {http://dx.doi.org/10.1016/j.tcs.2019.02.011}
  {\path{doi:10.1016/j.tcs.2019.02.011}}.

\end{thebibliography}
\end{document}